\newcommand{\norm}[1]{\left\lVert#1\right\rVert}
\newtheorem{proposition}{Proposition}
\newtheorem{lemma}{Lemma}
\theoremstyle{definition}
\newtheorem{example}{Example}
\newtheorem{remark}{Remark}
\begin{document}



  \title{\bf A Note on Monte Carlo Integration in High Dimensions}
  \author{Yanbo Tang\thanks{
    The author gratefully acknowledge Heather Battey, Michaël Lalancette and Robert Zimmerman for helpful comments.}\hspace{.2cm}\\
    Department of Mathematics, Imperial College London}
  \maketitle

\bigskip
\begin{abstract}
  Monte Carlo integration is a commonly used technique to compute intractable integrals and is typically thought to perform poorly for very high-dimensional integrals. To show that this is not always the case, we examine Monte Carlo integration using techniques from the high-dimensional statistics literature by allowing the dimension of the integral to increase. In doing so, we derive non-asymptotic bounds for the relative and absolute error of the approximation for some general classes of functions through concentration inequalities.
  We provide concrete examples in which the magnitude of the number of points sampled needed to guarantee a consistent estimate vary between polynomial to exponential, and show that in theory arbitrarily fast or slow rates are possible. This demonstrates that the behaviour of Monte Carlo integration in high dimensions is not uniform.
  Through our methods we also obtain non-asymptotic confidence intervals for Monte Carlo estimate which are valid regardless of the number of points sampled.  
\end{abstract}

\section{Introduction}
Monte Carlo integration is a simple and widely used numerical integration method which leverages randomness and computational power to approximate intractable integrals. Chapters 1 and 2 of \cite{mcbook-owen} provide some historical details as well as an introduction to the subject. Roughly speaking, we can think of this method as generating a random grid of points on which we evaluate the function being integrated, and average these values to form an estimate of the integral.
As a result, Monte Carlo integration is generally thought to perform poorly for very high-dimensional functions.
The intuition behind this is clear: the curse of dimensionality makes covering the region of integration exponentially more difficult as the dimension of the integral increases.
However this poor scaling of the accuracy of the method is not immediate from the usual statement about the error rates of Monte Carlo methods.
While it is true that the approximation error is $O(n^{-1/2})$ \citep[Chapter 2.1]{mcbook-owen} on the absolute scale regardless of the dimension of the integrand, where $n$ is the sample size, the constant involved in this rate can dependent on the dimension of the integral. 
Furthermore, for many statistical applications, the relative accuracy is of interest rather than the absolute accuracy and these two metrics can behave quite differently.
In order to characterize the dependence of the constant on the dimension of the integral, we derive the order of $n$ needed to guarantee a non-trivial relative error as the dimension of the integral increases. 
Formally, we wish to integrate a sequence of functions $f_p(x)$ from $\mathbb{R}^p \rightarrow \mathbb{R}$ with respect to a sequence of probability densities $\phi_p(x)$. 
The Monte Carlo procedure uses the following approximation:
\begin{align}\label{eq:estimate}
 \int_{D_p} f_p(x)\phi_p(x) dx \approx \sum_{i = 1}^n \frac{f_p(X_i)}{n},
\end{align}
where $X_i$ for $i = 1. \dots, n$ are independent and identically distributed (iid) samples from the probability densities $\phi_p(x)$ with support $D_p \subset \mathbb{R}^p$; this is often taken to be the uniform measure if the domain has finite Lebesgue measure.

For the right side of (\ref{eq:estimate}) to be statistically useful, some form of uncertainty quantification is necessary. One commonly used measure of uncertainty is a $1 - \alpha$ level confidence interval. Keeping this in mind, along with the fact that we are interested in a measure of relative accuracy, we propose to
measure the difficulty of the problem by the relationship required between $p$ and $n(p)$ such that when $p \rightarrow \infty$, for every $\delta > 0$ and for every sequence $M_p \rightarrow \infty$:
\begin{align}\label{eq:consistency}
  \mathbb{P}\left( \left|\frac{\int_{D_p} f_p(x)\phi_p(x) dx - \sum_{i = 1}^{n(p)} f_p(X_i)/n(p) }{\int_{D_p} f_p(x)\phi_p(x) dx}\right| > M_p \delta \right) \rightarrow 0,
\end{align}
 where $n(p)$, the sample size, is an increasing function in $p$.
This is requiring $n(p)$ to grow as fast or faster than rate of increase necessary for 
 \[ 
  \mathbb{P}\left( \left|\frac{\int_{D_p} f_p(x)\phi_p(x) dx - \sum_{i = 1}^{n(p)} f_p(X_i)/n(p) }{\int_{D_p} f_p(x)\phi_p(x) dx}\right| > \delta \right) \rightarrow c \in (0, 1),
\]
 which guarantees a finite length confidence interval for the relative accuracy of the estimate as $p \rightarrow \infty$. This requirement is inspired from the definition of rates of posterior contraction \citep[Definition 8.1]{ghosal2017fundamentals} and the rates of consistency of point estimators. We assume that the expectations are non-zero in order to make condition (\ref{eq:consistency}) well-defined. The growth rate of the function $n(p)$ measures the complexity of the problem with respect to the dimensionality of the integral.  
We show in the sequel that this cost vary depending on the function being integrated, and can increase at an arbitrarily slow or fast speed, demonstrating the performance of Monte Carlo procedures is not uniformly affected by the dimension of the integral. 

Although considering a changing integrand may appear unusual at first, this type of scheme has been used as a theoretical device in the proof for the accuracy of quasi-Monte Carlo methods \citep{lemieux2009monte, owen2019monte}, where a carefully chosen deterministic grid is used instead of a randomly sampled grid.
In this setting, uncertainty quantification is obtained through the Koksma-Hlawka inequality \citep{koksma}, which is shown by constructing a adversarial function which changes with the number of grid points, $n$.
This regime is also common in high-dimensional statistics, in which a statistical model is allowed to increase in complexity as the sample size increases, see \citet{vershynin2018high,wainwright2019high} for an overview of high-dimensional probability and statistics respectively.
In our case, we can imagine that the function of interest lies somewhere along the sequence $f_p(x)$ and we wish to give some guidance on the required size of $n$ to satisfy a certain accuracy requirement, depending on the placement of this function.
Even though the statement in Equation (\ref{eq:consistency}) is asymptotic, we use non-asymptotic concentration bounds throughout this work. 
Through this approach we also obtain non-asymptotic $1- \alpha$ confidence intervals for the absolute error, and the relationship between the sampling cost of Monte Carlo procedures in high dimensions and the structure of $f_p(x)$ and $\phi_p(x)$. 

We first provide some background information on Monte Carlo integration and concentration inequalities.
We then examine some results for approximating volumes in high dimensions, and general integrals under structural assumptions.
We then provide a comparison between asymptotic and non-asymptotic 
confidence intervals in Section 5.
Calculations and derivations are deferred to the Appendix.
We take $g(n) = \omega(a_n)$ to mean that there exists a constant $N_0$ such that for all $n > N_0$ $|g(n)/a_n| \geq B$ for some $B > 0$, and take $ O(a_n)$ to mean that there exists a constant $N_0$ such that for all $n > N_0$ $|g(n)/a_n| \leq B$ for some $B > 0$
.
We use uppercase letters such as $X$ or $X_i$ to denote a vector-valued random variable. When invoking the Lipschitz property we take it to be with respect to the Euclidean norm. The Lipschitz property implies the uniform continuity of a function, and therefore bound its rate of increase uniformly across its domain. This fact will prove helpful in obtaining faster rates of concentration.

\section{Background}
Monte Carlo integration is justified by the application of the central limit theorem (CLT) in which we write
\begin{align*}
  \int_{D_p} f_p(x) \phi_p(x) dx = \mathbb{E}[f_p(X)], 
\end{align*}
where $\phi_p(\cdot)$ is a probability measure.
The addition of $\phi_p(\cdot)$ makes integration on infinite domains feasible; some texts take $\phi_p$ as the uniform probability measure when considering domains with finite Lebesgue measure.
If we are able to sample independent and identically distributed points $X_1, \dots, X_n$ from the density $\phi_p$, we can estimate the integral through:
\begin{align}\label{eq:monte_carlo_estimate}
  \mathbb{E}[f_p(X)] \approx  \frac{\sum_{i = 1}^n f_p(X_i)}{n},
\end{align}
and assuming that $E[|f_p(X)|] < \infty$ then we are guaranteed a consistent estimate as $n \rightarrow \infty$ by the strong law of large numbers. 

The approximation error is typically obtained by using an empirical estimate of the variance and appealing to the CLT to produce an asymptotically valid confidence interval. 
However, noting that the estimate introduced in (\ref{eq:monte_carlo_estimate}) is the average, quantifying the approximation error is equivalent to quantifying how rapidly an average concentrates around its expectation.
Seeing the problem from this perspective allows us to use powerful concentration inequalities to bound the non-asymptotic approximation error, see \cite{boucheron2013concentration} for a textbook discussing concentration inequalities and their uses.
In order to use these bounds however, we require additional assumptions on $f_p(x)$, and $\phi_p(x)$. We introduce some particular structures which lead to rapid concentration, including averages of independent sub-Gaussian random variables, concave/convex Lipschitz functions on the hypercube and Lipschitz functions of strongly log-concave random variables.

\subsection{Averages of Subgaussian Variables}
Sub-Gaussian random variables are a class of random variables whose tail probabilities can be dominated by that of a Gaussian random variable with mean $\mu$ and variance $\sigma^2$.
As Gaussian distributions concentrated around their mean at an exponentially fast rate \citep[Theorem 1.2.6]{durrett2019probability}, we can consequently expect the same behavior from sub-Gaussian random variables.     
It can be shown that for a random variable to have lighter tail than a Gaussian distribution with mean $\mu$ and variance $\sigma^2 >0$ is mathematically equivalent to upper bounding the moment generating function of the centered random variable by that of a Gaussian with variance $\sigma^2$ \citep[Theorem 2.6]{wainwright2019high}.
More precisely, a random variable is said to be sub-Gaussian with mean $u$ and a proxy variance $\sigma^2 > 0$ if its moment generating function $M_Z(t)$ satisfies: 
\[
 M_Z(t) = E\{\exp[t(Z - \mu)]\} \leq \exp\left( \frac{\sigma^2 t^2}{2} \right) \text{ for all } t \in \mathbb{R},
\]
note that the proxy variance $\sigma^2$ may not be unique in this definition, but taking $\sigma^2$ to be as small as possible while satisfying the above condition results in a sharper bound.
Then $\bar{Z}$, the average of iid realizations $Z_1, \dots, Z_n$ satisfies
\begin{align}\label{eq:subG_Tail}
  \mathbb{P}\left[ \left| \bar{Z} - E(Z) \right| > \delta  \right] \leq 2\exp\left(-\frac{\delta^2 n}{2\sigma^2} \right),  
\end{align}
see \citet[\S2]{wainwright2019high}. For the purpose of verifying condition (\refeq{eq:consistency}), the bound in Equation (\ref{eq:subG_Tail}) can be re-expressed as:
\begin{align}\label{eq:subG_Tail_rel}
  \mathbb{P}\left[ \left| \frac{\bar{Z} - E(Z)}{E(Z)} \right| > \delta  \right] \leq 2\exp\left[-\frac{\delta^2 E(Z)^2 n}{2\sigma^2} \right],  
\end{align}
through simple algebraic manipulations. This form is more useful for studying the relative accuracy of Monte Carlo integration. We assume the expectation is non-zero or the ratio in (\ref{eq:subG_Tail_rel}) is ill-defined.
As we do not know the expectation, we need to bound it from below in order to use (\ref{eq:subG_Tail_rel}) in practice; for convex functions this can sometimes be accomplished by Jensen's inequality.
One important class of subgaussian random variables are bounded random variables taking value in the interval $[a, b]$, in which case the proxy variance $\sigma^2$ is $(b - a)^2/4$, this is a consequence of Hoeffding's Lemma \citep[Lemma 2.2]{boucheron2013concentration}.

\subsection{Convex/Concave Lipschitz Functions of Random Variables on the Unit Cube}
Beyond simple averages, there are concentration results for convex/concave Lipschitz functions of random variables on the hypercube. 
For a Lipschitz convex/concave function $g(x)$ with Lipchitz constant $L$, and for a random variable $X \subset [0,1]^p$ whose components are independently distributed, 
\begin{align}
  \mathbb{P}\left\{ \left| g(X) - E[g(X)] \right| > \delta  \right\} \leq 2\exp\left(-\frac{\delta^2}{2L^2} \right).  
\end{align}
which implies
\begin{align}
  \mathbb{P}\left\{ \left| \frac{g(X) - E[g(X)]}{E[g(X)]} \right| > \delta  \right\} \leq 2\exp\left\{-\frac{\delta^2 E[g(X)]^2 }{2L^2} \right\},  
\end{align}
by Theorem 3.24 in \cite{wainwright2019high}.
We will apply this bound to averages of convex/concave Lipschitz functions in Section 4, by using the fact that averages of such functions remains convex/concave Lipschitz. The surprising aspect of this bound is that it is \textit{dimension free}, as it holds for all convex/concave Lipschitz continuous functions of arbitrary dimension. However, as we will see in the examples in Section 4, it is possible that the Lipschitz constant $L$ increases with the dimension of the function.

\subsection{Lipschitz Functions of Strongly Log-concave Random Variables}
For some $\gamma> 0$, a density $\phi(x) = \exp[-\psi(x)]$ is strongly $\gamma$-log-concave if it satisfies the following: 
\begin{align*}
  \lambda\psi(x) + (1- \lambda)\psi(y) - \psi[ \lambda x + (1 - \lambda)y  ] \geq \frac{\gamma}{2} \lambda(1- \lambda)\norm{x - y}_2^2, 
\end{align*}
for all $\lambda \in [0,1]$ and all $x, y \in \mathbb{R}^p$.
Then for any $L$-Lipschitz function $g$ and a strongly $\gamma$-log-concave random variable $X$:
\begin{align*}
  \mathbb{P}\left\{ \left| g(X) - E[g(X)] \right| > \delta  \right\} \leq 2\exp\left(-\frac{\gamma\delta^2}{4L^2} \right),  
\end{align*}
and
\begin{align*}
  \mathbb{P}\left\{ \left| \frac{g(X) - E[g(X)]}{E[g(X)]} \right| > \delta  \right\} \leq 2\exp\left\{-\frac{\gamma\delta^2 E[g(X)]^2 }{4L^2} \right\},
\end{align*}
by Theorem 3.16 in \cite{wainwright2019high}.
Strongly log-concave distributions can also be characterized through the product of a log-concave density with a multivariate normal distribution \citep{saumard2014log}.
We will apply this bound to averages of iid realizations of g(X) by using properties of strongly $\gamma$-log-concave random variables in Section 4. This bound also does not depend on the dimension of the integral, although once again, the constants involved can potentially increase with $p$. Intuitively, the strongly log-concave density must be uni-modal and `hump-like' on the log scale, and as this is exponentiated the resulting density concentrates rapidly around its mean.

\section{Estimation of High-Dimensional Volumes}
Estimation of volumes is a classic application of Monte Carlo integration. 
Suppose we are interested in calculating the unknown volume (Lebesgue measure) of a set $F$ which is contained within a set $F^\prime$ with known finite volume.
Then if we can sample $X_i$'s from the uniform distribution supported on $F^\prime$, we can approximate the volume of $F$ by:
\begin{align*}
  \text{Vol}(F) \approx \text{Vol}(F^\prime) \sum_{i = 1}^n \frac{\mathbb{I}( X_i \in F )}{n}.
\end{align*}
As the random variable $\text{Vol}(F^\prime)\mathbb{I}( X_i \in F )$ only takes the value of $0$ or $\text{Vol}(F^\prime)$, it is subgaussian with $\sigma^2 = \text{Vol}(F^\prime)^2/4$. Noting this and using the subgaussian bound in Section 2.1 gives:  

\begin{proposition}\label{prop:volume}
  Suppose that both $\text{Vol}(F)$ and $\text{Vol}(F^\prime)$ are finite. Then:
  \begin{align*}
    \mathbb{P}\left[ \left|  \frac{{\text{Vol}}(F^\prime)\sum_{i = 1}^n \mathbb{I}( X_i \in F )}{n} - \text{Vol}(F) \right| > \delta  \right] \leq 2\exp\left[-\frac{2\delta^2 n}{\text{Vol}(F^\prime )^2} \right],    
  \end{align*}
  which implies the non-asymptotic $1- \alpha$ confidence interval 
\[ 
  \left(  \text{Vol}(F^\prime)\left[ \frac{\sum_{i = 1}^n \mathbb{I}( X_i \in F )}{n} - \sqrt{\frac{\log(2/\alpha)}{2 n}} \right],\ \text{Vol}(F^\prime)\left[ \frac{\sum_{i = 1}^n \mathbb{I}( X_i \in F )}{n} + \sqrt{\frac{\log(2/\alpha)}{2 n}}\right]  \right) 
\]
for $\alpha < 1$ and
\begin{align*}
  \mathbb{P}\left[ \left| \frac{ \text{Vol}(F^\prime) \sum_{i = 1}^n \mathbb{I}( X_i \in F )/n - \text{Vol}(F)}{\text{Vol}(F)} \right| > \delta  \right] \leq 2\exp\left[-\frac{2\delta^2 \text{Vol}(F)^2 n}{\text{Vol}(F^\prime )^2} \right].
\end{align*}
where $X_i$ are iid copies of a uniform random variable supported on $F^\prime$.
\end{proposition}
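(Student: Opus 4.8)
The plan is to recognize the summands as bounded, hence sub-Gaussian, random variables and then apply the tail bounds (\ref{eq:subG_Tail}) and (\ref{eq:subG_Tail_rel}) essentially verbatim. Set $Z_i = \text{Vol}(F')\mathbb{I}(X_i \in F)$. Since each $Z_i$ takes only the two values $0$ and $\text{Vol}(F')$, it is a bounded random variable supported on $[0,\text{Vol}(F')]$; by Hoeffding's lemma it is therefore sub-Gaussian with proxy variance $\sigma^2 = \text{Vol}(F')^2/4$, which is precisely the bounded-variable fact recorded at the end of Section 2.1. The average of the $Z_i$ is exactly the estimator $\text{Vol}(F')\sum_{i=1}^n \mathbb{I}(X_i \in F)/n$.

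The second step is to pin down the target of concentration. Because $X_i$ is uniform on $F'$ and $F \subseteq F'$, we have $\mathbb{P}(X_i \in F) = \text{Vol}(F)/\text{Vol}(F')$, so $E[Z_i] = \text{Vol}(F')\cdot \text{Vol}(F)/\text{Vol}(F') = \text{Vol}(F)$, i.e. the estimator is unbiased for the quantity of interest. Substituting $E(Z) = \text{Vol}(F)$ and $\sigma^2 = \text{Vol}(F')^2/4$ directly into (\ref{eq:subG_Tail}) collapses the exponent $-\delta^2 n/(2\sigma^2)$ into $-2\delta^2 n/\text{Vol}(F')^2$, which is the first displayed inequality. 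Feeding the same two quantities into the relative form (\ref{eq:subG_Tail_rel}) yields the final displayed inequality, with exponent $-2\delta^2\,\text{Vol}(F)^2 n/\text{Vol}(F')^2$.

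For the confidence interval I would invert the absolute bound: setting the right-hand side $2\exp[-2\delta^2 n/\text{Vol}(F')^2]$ equal to $\alpha$ and solving for $\delta$ gives the half-width $\delta = \text{Vol}(F')\sqrt{\log(2/\alpha)/(2n)}$. The event $|\bar Z - \text{Vol}(F)| \le \delta$ then rearranges into the stated two-sided interval centred at $\text{Vol}(F')\sum_{i=1}^n \mathbb{I}(X_i \in F)/n$, and holds with probability at least $1-\alpha$ for every $n$, which is the claimed non-asymptotic validity.

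There is no serious obstacle here: once the summands are identified as bounded, every assertion is a direct substitution into results already established in Section 2.1. The only points requiring genuine care are the two facts that license the substitution — the proxy variance $\text{Vol}(F')^2/4$ furnished by Hoeffding's lemma, and the unbiasedness $E[Z_i] = \text{Vol}(F)$, which itself depends on the uniformity of $X_i$ together with the inclusion $F \subseteq F'$. It is worth flagging that the proxy variance is conservative, since Hoeffding's lemma discards the actual (smaller) variance of a Bernoulli-type variable; the resulting bounds are therefore valid but not tight.
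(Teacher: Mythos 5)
Your argument is correct and is essentially identical to the paper's own (which simply observes that $\text{Vol}(F')\mathbb{I}(X_i\in F)$ is bounded, hence sub-Gaussian with proxy variance $\text{Vol}(F')^2/4$, and substitutes into the bounds of Section 2.1). Your explicit verification of unbiasedness and the inversion for the confidence interval are exactly the steps the paper leaves implicit.
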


The typical example provided in textbooks such as \citet[Chapter 3.2]{ross2013simulation}, is the problem of computing the value of $\pi$ or the area of a circle inscribed in a square. 
We extend the latter to the case of a hypersphere contained in a hypercube and showcase how the curse of dimensionality manifests itself in volume estimation.

\begin{example}
  Let $F = \{ x \in R^p : \norm{x}_2 \leq 1 \}$ and let $F^\prime = \{ x \in R^p : \norm{x}_\infty \leq 1 \}$, where $F$ is a hypersphere of dimension $p$ with radius 1, 
  while $F^\prime$ is the hypercube centered at $0$ with sides of length 2. 
  In this case we know that $\text{Vol}(F) = \pi^{p/2}/\Gamma(p/2 +1 )$ and $\text{Vol}(F^\prime) = 2^p$, thus
  \begin{align*}
    &\mathbb{P}\left[ \left| {\text{Vol}}(F^\prime) \sum_{i = 1}^n \mathbb{I}( X_i \in F )/n - \text{Vol}(F) \right| > \delta  \right] \leq 2\exp\left(-\frac{\delta^2 n}{2^{2p - 1}} \right), \\
    &\mathbb{P}\left[ \left| \frac{ \text{Vol}(F^\prime) \sum_{i = 1}^n \mathbb{I}( X_i \in F ) - \text{Vol}(F)}{\text{Vol}(F)} \right| > \delta  \right] \leq 2\exp\left[-\frac{\delta^2 \pi^p n}{2^{2p -1} \Gamma(p/2 + 1)^2} \right].
  \end{align*}
  The absolute error is decaying extremely quickly, as the volume of the unit sphere decays exponentially to $0$ as the dimension increases, therefore the chance of hitting the sphere by sampling from the unit cube is also exponentially decaying to 0. Our estimate will essentially be $0$, but this is quite close to the volume of an unit sphere in high dimensions.
  However, in order to satisfy our relative consistency requirement, we need the number of $X_i$ sampled to be
  $n(p) = \exp\{\omega[p\log(p)]\}$,  
  by Stirling's approximation. Therefore the numerical cost of integrating this function is exponentially increasing in the dimension of the sphere.
\end{example}

In general this bound can be improved through a tighter control of the ratio $\text{Vol}(F)/\text{Vol}(F^\prime)$, in fact if this ratio is bounded below by a constant independent of dimension, then the relative error would be exponentially decreasing in $n$ and we would obtain a dimension-free result. 
However it is increasingly difficult to find a set $F^\prime$ to cover $F$ in high dimensions, thus this may not be feasible in practice. 

As noted by a reviewer, this example is part of the broader problem of estimating a small probability or proportion. Consider the problem of estimating the probability $\zeta_p := \mathbb{P}(X \in E_p) \in (0, 1]$ with the estimator $\bar{\zeta}_p = \sum_{i = 1}^n \mathbb{I}\{X_i \in E_p\}/n$ then:
\[ \mathbb{P}\left( \left| \frac{\bar{\zeta}_p - \zeta_p}{\zeta_p} \right| > \delta  \right) \leq 2\exp\left(-\frac{\delta^2 \zeta_p^2 n}{2\sigma^2} \right), \]
by Equation (\ref{eq:subG_Tail_rel}), where $X_i$ are iid copies of $X$. The requirement for relative consistency is satisfied if $\zeta_p^2 n \rightarrow \infty$, therefore depending on the speed at which $\zeta_p \rightarrow 0$ an arbitrarily fast or slow rate of growth of the function $n(p)$ is possible.


\section{Expectations of High-Dimensional Functions}
We consider two additional examples of Monte Carlo integration for expectations of smooth functions, going beyond the estimation of probability and proportions to demonstrate the general utility of the results presented in Section 2. 
The indicator function used in Section 3 is, in some sense, ill behaved, as it is not a continuous function and can change very rapidly in a small neighborhood. 
Considering functions with restrictions on their rate of growth through the Lipschitz property, along with other structural assumptions induces much faster rates of concentration.

\subsection{Lipschitz Convex or Concave Distributions on the Unit Square}

\begin{proposition}\label{prop:lip}
For convex or concave Lipschitz function $f_p(x): \mathbb{R}^p \rightarrow \mathbb{R}$ with Lipschitz constant $L_p$, and a random variable $X$ whose components are independently distributed and supported on $[0,1]$,
\begin{align*}
  \mathbb{P}\left\{ \left| \sum_{i = 1}^n f_p(X_i)/n - E[f_p(X)] \right| > \delta  \right\} 
  \leq 2\exp\left(-\frac{\delta^2 n}{2L_p^2} \right),    
\end{align*}
which implies the non-asymptotic $1- \alpha$ confidence interval 
\[ \left( \frac{\sum_{i = 1}^n f_p(X_i)}{n} - \sqrt{\frac{2\log(2/\alpha)L_p^2}{n}}, \ \frac{\sum_{i = 1}^n f_p(X_i)}{n} + \sqrt{\frac{2\log(2/\alpha)L_p^2}{n}}  \right)\] 
for $\alpha < 1$ and
\begin{align*}
  \mathbb{P}\left\{ \left| \frac{\sum_{i = 1}^n f_p(X_i)/n - E[f_p(X) ]}{E[f_p(X) ]} \right| > \delta  \right\} 
  \leq 2\exp\left\{-\frac{\delta^2 n E[f_p(X) ]^2}{2L_p^2} \right\},    
\end{align*}
where $X_i$ are iid copies of $X$.
\end{proposition}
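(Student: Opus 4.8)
The plan is to reduce the statement to a single application of the dimension-free bound for convex/concave Lipschitz functions recorded in Section 2.2, applied not to $f_p$ itself but to the average viewed as one function of the stacked sample. Concretely, I would define $g: [0,1]^{np} \to \mathbb{R}$ by $g(x_1, \dots, x_n) = n^{-1}\sum_{i=1}^n f_p(x_i)$, where each block $x_i \in [0,1]^p$, and regard the concatenated random vector $(X_1, \dots, X_n)$ as a single random variable on $[0,1]^{np}$. Since the $X_i$ are iid with independent coordinates, all $np$ coordinates of this stacked vector are independent, so the hypotheses of the Section 2.2 bound are met once I check that $g$ is convex/concave and compute its Lipschitz constant.

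First I would note that convexity (or concavity) is inherited by $g$: each $f_p(x_i)$ is a convex function of the full vector that happens to ignore the other blocks, a finite sum of convex functions is convex, and scaling by $1/n > 0$ preserves this. The crux of the argument is the Lipschitz constant of $g$. For stacked points $X = (X_1, \dots, X_n)$ and $Y = (Y_1, \dots, Y_n)$, the triangle inequality together with the $L_p$-Lipschitz property of $f_p$ gives $|g(X) - g(Y)| \le (L_p/n)\sum_{i=1}^n \norm{X_i - Y_i}_2$, and then Cauchy--Schwarz yields $\sum_{i=1}^n \norm{X_i - Y_i}_2 \le \sqrt{n}\,\norm{X - Y}_2$. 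Hence $g$ is Lipschitz with constant $L_p/\sqrt{n}$ with respect to the Euclidean norm on $[0,1]^{np}$; this $\sqrt{n}$ improvement is exactly what produces the sample-size factor $n$ in the exponent.

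With convexity/concavity and the Lipschitz constant $L_p/\sqrt{n}$ in hand, I would apply the Section 2.2 bound to $g$ directly, obtaining $\mathbb{P}\{|g(X) - E[g(X)]| > \delta\} \le 2\exp(-\delta^2/[2(L_p/\sqrt{n})^2]) = 2\exp(-\delta^2 n/[2L_p^2])$. Since $E[g(X)] = E[f_p(X)]$ by linearity of expectation and the identical distribution of the $X_i$, this is precisely the first display. The confidence interval then follows by setting the right-hand side equal to $\alpha$ and solving $\delta = \sqrt{2\log(2/\alpha)L_p^2/n}$, and the relative bound follows from the algebraic manipulation used in Section 2.1, namely dividing through by $E[f_p(X)]$ (assumed nonzero) and substituting $\delta \mapsto \delta\,|E[f_p(X)]|$. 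The one point requiring genuine care is the Lipschitz computation: I expect the main obstacle to be verifying that the average really is $L_p/\sqrt{n}$-Lipschitz rather than merely $L_p$-Lipschitz, since it is the Cauchy--Schwarz step that correctly captures the gain from averaging across the $n$ independent blocks.
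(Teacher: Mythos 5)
Your proposal is correct and follows essentially the same route as the paper: the paper's proof also stacks the sample into a single function $g(X_1,\dots,X_n)=\sum_i f_p(X_i)/n$ on $[0,1]^{np}$, establishes the $L_p/n^{1/2}$ Lipschitz constant in Lemma 1 (via Jensen's inequality, which is the same estimate as your Cauchy--Schwarz step), and then invokes the dimension-free convex/concave Lipschitz concentration bound of Section 2.2. The confidence interval and relative-error statements are obtained exactly as you describe.
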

The proof follows from the observation that a sum of Lipschitz convex/concave functions is still a Lipschitz convex/concave and the previously discussed concentration inequality. The rate of convergence for the relative accuracy only depends on the dimension through the Lipschitz constant $L_p$ and $E[f_p(X)]$. These can increase with dimension, which places constraints on the growth rate of $p$ relative to $n$, as demonstrated in Example 2.

\begin{example}
Consider 
\[
  \int_{[0,1]^p} f_p(x) dx = \int_{[0,1]^p} \frac{1}{1 + \norm{x}_q} dx,  
\] 
for any $q > 1$.
We apply Proposition \ref{prop:lip} to show that $n(p)$ has a polynomial dependence on $p$ for our relative consistency condition to hold. We first note that $f_p(x)$ is $p^{ \eta}$-Lipschitz, where $\eta = \max(1/q - 1/2, 0 )$ as  
\begin{align*}  
  |f_p(x) - f_p(y)| &=\left| \frac{1}{1 + \norm{x}_q} - \frac{1}{1 + \norm{y}_q} \right| \\
  &\leq \norm{x - y}_q \leq p^{ \eta} \norm{x - y}_2, 
\end{align*}
since $\norm{x}_{q} \leq \norm{x}_2 $ for all $ q \geq 2$, and $\norm{x}_q \leq p^{1/q - 1/2} \norm{x}_2$ for $ 1 \leq q < 2$.
The expectation of this function can be bounded from below by Jensen's inequality:
\[
  E\left( \frac{1}{1 + \norm{x}_q} \right) > \frac{1}{1 + E(\norm{x}_q)} \geq \frac{1}{1 + E(\norm{x}_1)} = \frac{2}{2 + p },  
\]
which results in:
\begin{align*}
  \mathbb{P}\left[ \left| \frac{\sum_{j = 1}^n f_p(X_i)/n - E[f_p(X) ]}{E[f_p(X) ]} \right| > \delta  \right] 
  \leq 2\exp\left(-\frac{\delta^2 n }{p^{2\eta}(2 + p)} \right),    
\end{align*}
meaning that it is sufficient for $n(p) = \omega (p^{2\eta + 1})$ to guarantee an asymptotically consistent estimator in relative accuracy.
\end{example}

\begin{remark}
  Proposition \ref{prop:lip} applies to any function defined on a hyper-rectangle as we may shift the domain of a function to the unit cube by centering and rescaling the function, although this will change the Lipschitz constant of the function.
\end{remark}

\subsection{Strongly Log-Concave Distributions}
For Lipschitz functions of strongly log-concave distributions the concentration can also be quite rapid, this allows us to consider an example of a function defined on a infinite domain and remove the convexity or concavity assumption on the function $f_p(x)$.

\begin{proposition}\label{prop:log}
  For a function $f_p(x): \mathbb{R}^p \rightarrow \mathbb{R}$ with Lipschitz constant $L_p$, and a strongly $\gamma$-log-concave random variable $X$,
  \begin{align*}
    \mathbb{P}\left\{ \left| \sum_{i = 1}^n f_p(X_i)/n - E[f_p(X) ] \right| > \delta  \right\} 
    \leq 2\exp\left(-\frac{\gamma\delta^2 n}{4L_p^2} \right),    
  \end{align*}
  which implies the non-asymptotic $1- \alpha$ confidence interval 
\[ \left( \frac{\sum_{i = 1}^n f_p(X_i)}{n} - \sqrt{\frac{4\log(2/\alpha)L_p^2}{\gamma n}},\ \frac{\sum_{i = 1}^n f_p(X_i)}{n} + \sqrt{\frac{4\log(2/\alpha)L_p^2}{\gamma n}}  \right)\] 
for $\alpha < 1$ and
  \begin{align*}
    \mathbb{P}\left\{ \left| \frac{\sum_{i = 1}^n f_p(X_i)/n - E[f_p(X) ]}{E[f_p(X) ]} \right| > \delta  \right\} 
    \leq 2\exp\left\{-\frac{\gamma\delta^2 n E[f_p(X) ]^2}{4L_p^2} \right\},    
  \end{align*}
  where $X_i$ are iid copies of $X$.
  \end{proposition}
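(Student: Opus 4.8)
The plan is to reduce Proposition~\ref{prop:log} to the single-observation concentration bound for Lipschitz functions of strongly $\gamma$-log-concave random variables stated in Section~2.3, exactly as the analogous Proposition~\ref{prop:lip} was reduced to its single-observation bound. The key structural fact I would use is that the empirical average $\bar{f}_p(X_1,\dots,X_n) := \sum_{i=1}^n f_p(X_i)/n$, viewed as a function of the concatenated vector $(X_1,\dots,X_n) \in \mathbb{R}^{np}$, is itself Lipschitz with a controllable constant, and that the joint law of $(X_1,\dots,X_n)$ is strongly log-concave with a parameter I can compute explicitly.

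First I would establish the Lipschitz constant of the map $g(x_1,\dots,x_n) = \frac{1}{n}\sum_{i=1}^n f_p(x_i)$ on $\mathbb{R}^{np}$. For two points $(x_1,\dots,x_n)$ and $(y_1,\dots,y_n)$, I would write the difference as $\frac{1}{n}\sum_i [f_p(x_i) - f_p(y_i)]$, bound each term by $L_p \norm{x_i - y_i}_2$, and then apply the Cauchy--Schwarz inequality to the sum to turn $\sum_i \norm{x_i - y_i}_2$ into something controlled by $(\sum_i \norm{x_i - y_i}_2^2)^{1/2} = \norm{(x_1,\dots,x_n) - (y_1,\dots,y_n)}_2$. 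The factor of $1/n$ in the average combined with the $\sqrt{n}$ loss from Cauchy--Schwarz yields a Lipschitz constant of $L_p/\sqrt{n}$ for $g$ on the product space. Second, I would verify that the product density $\prod_{i=1}^n \phi_p(x_i)$ is strongly $\gamma$-log-concave on $\mathbb{R}^{np}$ with the \emph{same} parameter $\gamma$: since $-\log \prod_i \phi_p(x_i) = \sum_i \psi(x_i)$ is a separable sum, the strong-convexity inequality defining $\gamma$-log-concavity is preserved coordinate-block-wise and hence holds for the product with parameter $\gamma$.

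With these two facts in hand, the proof is immediate: I apply the Section~2.3 concentration inequality to the $(L_p/\sqrt{n})$-Lipschitz function $g$ of the strongly $\gamma$-log-concave vector $(X_1,\dots,X_n)$, noting that $E[g] = E[f_p(X)]$ by linearity. The exponent becomes $-\frac{\gamma \delta^2}{4 (L_p/\sqrt{n})^2} = -\frac{\gamma \delta^2 n}{4 L_p^2}$, which is exactly the claimed bound. The relative-accuracy statement then follows by the same algebraic substitution $\delta \mapsto \delta\, |E[f_p(X)]|$ used to pass from \eqref{eq:subG_Tail} to \eqref{eq:subG_Tail_rel}, and the confidence interval follows by setting the right-hand side equal to $\alpha$ and solving for the half-width $\sqrt{4\log(2/\alpha)L_p^2/(\gamma n)}$.

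I do not expect a serious obstacle here, since the argument is a clean lifting of a one-sample bound to the average via the product structure. The one point requiring genuine care is the second step: confirming that the product of $n$ strongly $\gamma$-log-concave densities retains the parameter $\gamma$ rather than degrading it. This is where I would be most careful, because the definition of strong $\gamma$-log-concavity is stated in terms of $\psi$ on $\mathbb{R}^p$ and I must check that separability across blocks does not dilute the constant; the verification is routine but it is the logical linchpin, as a wrong $\gamma$ for the joint law would propagate directly into the final exponent.
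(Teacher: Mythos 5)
Your proposal is correct and follows essentially the same route as the paper: reduce to the single-draw bound of Section~2.3 by showing the average is $L_p/n^{1/2}$-Lipschitz on the product space and that the joint law of $(X_1,\dots,X_n)$ remains strongly $\gamma$-log-concave with the same $\gamma$ (the paper's Lemmas~1 and~2), then read off the exponent and solve for $\delta$ to get the interval. The only cosmetic difference is that you invoke Cauchy--Schwarz to pass from $\sum_i \norm{x_i-y_i}_2$ to $\sqrt{n}\,\norm{(x_1,\dots,x_n)-(y_1,\dots,y_n)}_2$, where the paper uses Jensen's inequality for $x \mapsto x^{1/2}$; these yield the identical constant.
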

This is shown by noting that an average of Lipschitz function is Lipschitz and the fact that the joint distribution of independent strongly $\gamma$-log-concave distribution is still strongly $\gamma$-log-concave, as shown in the Appendix.
One particularly important member of strongly log-concave distributions is the multivariate normal distribution with non-singular covariance matrix $\Sigma$, which is strongly $\gamma$-log-concave, with $\gamma$ equal to the minimal eigenvalue of $\Sigma^{-1}$. 

\begin{example}
  Consider the following expectation where $X \in \mathbb{R}^p$ follows a multivariate normal distribution with $0$ mean and a covariance matrix $\Sigma$ such that $\sigma_{\min}(\Sigma^{-1}) = \gamma > 0 $. This condition on the minimum eigenvalue implies that the distribution of $X$ is strongly $\gamma$-log-concave. We wish to estimate: 
  \[
\mathbb{E}\left[ \arctan\left( 1 + \norm{X}_1 \right) \right].\]
This function is $p^{1/2}$-Lipschitz due to the fact that: 
\begin{align*}
  \arctan(a) - \arctan(b) \leq \frac{a - b}{1 + ab},
\end{align*}
and the fact that $\lVert x \rVert_1 - \lVert y\rVert_1 \leq \lVert x - y \rVert_1$. 
Finally we may bound the expectation of this function from below by $\arctan(1)$, thus
\begin{align*}
  \mathbb{P}\left\{ \left| \frac{\sum_{i = 1}^n f_p(X_i)/n - E[f_p(X) ]}{E[f_p(X) ]} \right| > \delta  \right\} 
  \leq 2\exp\left[-\frac{\gamma\delta^2 n \arctan(1)^2}{4p} \right],    
\end{align*}
meaning that $n(p) = \omega(p)$ is needed to guarantee a relative consistent estimate. 
\end{example}

\section{Non-asymptotic and Asymptotic Confidence Intervals}

We briefly discuss and compare the non-asymptotic confidence intervals presented in Sections 3 and 4 to asymptotic and exact confidence intervals in a simple example to better understand their coverage properties.
The non-asymptotic confidence intervals considered in the previous sections will tend to be conservative for most distributions. 
This is necessary as the probability bounds used are applicable to a wide class of distributions and the equality in the bound cannot hold for all values of $t \in \mathbb{R}$, thus they must in general overstate the probability of the event. 

\begin{example}
  We observed a sample from a binomial distribution with parameters $(k, p)$ where $k$ is the number of coin tosses and $p$ is the probability of heads. We treat $k$ as known and consider the problem of providing a 90\% and 95\% confidence interval for $p$. We consider three possible confidence intervals, the Clopper-Pearson exact confidence interval, the Bayesian credible interval obtained with the $\text{Beta}(1/2,1/2)$ prior (which is Jeffrey's prior) and the non-asymptotic confidence interval from Proposition 1. The Bayesian credible interval is used instead of the usual normal approximation as for certain values of $p$, we are likely to obtain an observation of $0$ or $k$, which results in an invalid approximation.
  This credible interval also coincides with the standard normal approximation as $k \rightarrow \infty$ by the Bernstein von-Mises theorem \citep[Chapter 10]{van2000asymptotic}, and is therefore asymptotically valid.
  Empirical coverage probabilities from $1,000$ Monte Carlo simulations for a 90\% and 95\% nominal confidence intervals are available on Figures \ref{fig:90} and \ref{fig:95}. We see that the proposed non-asymptotic interval tends to be more conservative than the Clopper-Pearson interval. Although we do note the the Bayesian credible interval suffers poor coverage for certain values of the parameter $p$ even if the sample size is relatively high.
\end{example}

\begin{figure}
\centering
\begin{subfigure}{.5\textwidth}
  \centering
  \includegraphics[width=1\linewidth]{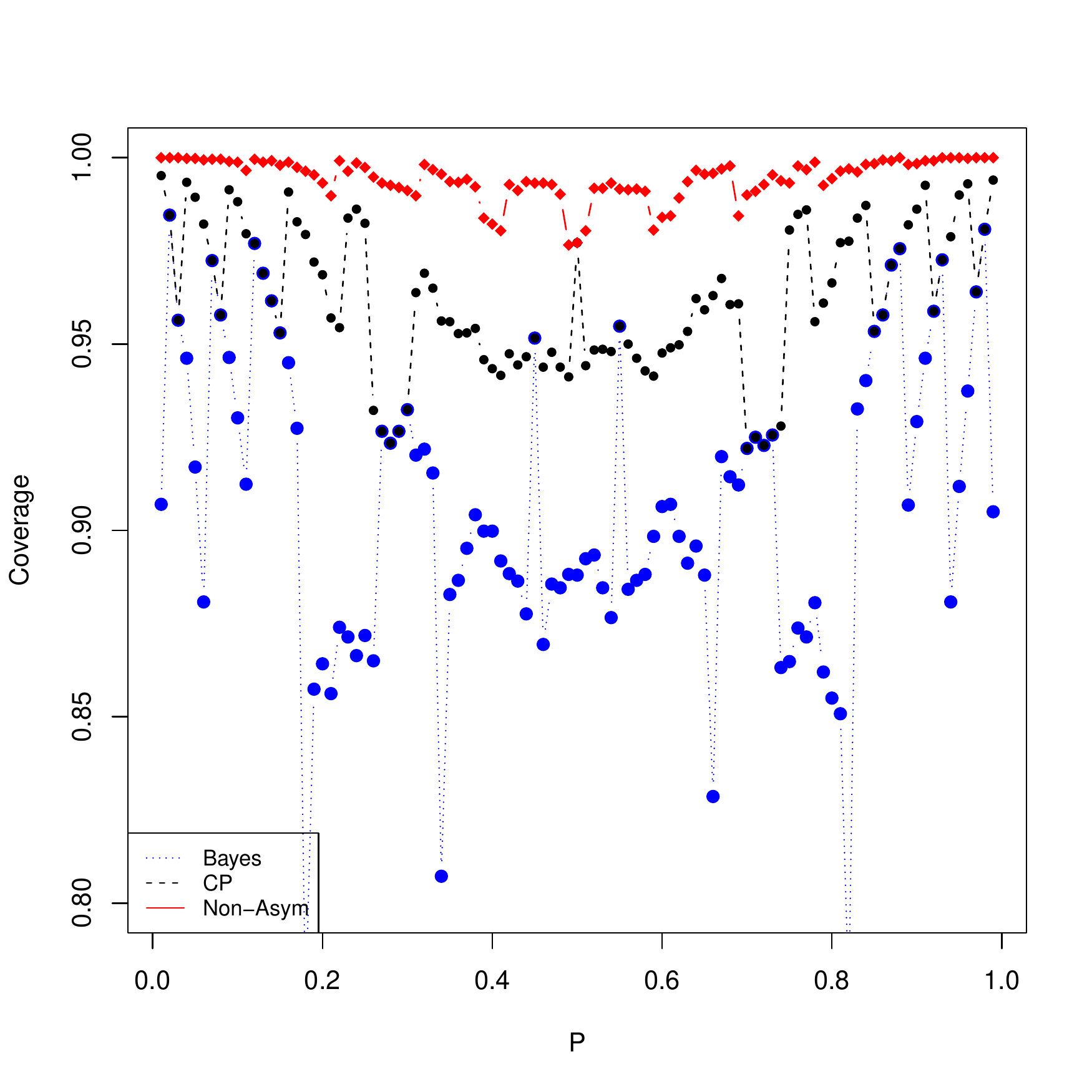}
  \caption{90\% coverage $k$ = 10}
  \label{fig:sub1}
\end{subfigure}%
\begin{subfigure}{.5\textwidth}
  \centering
  \includegraphics[width=1\linewidth]{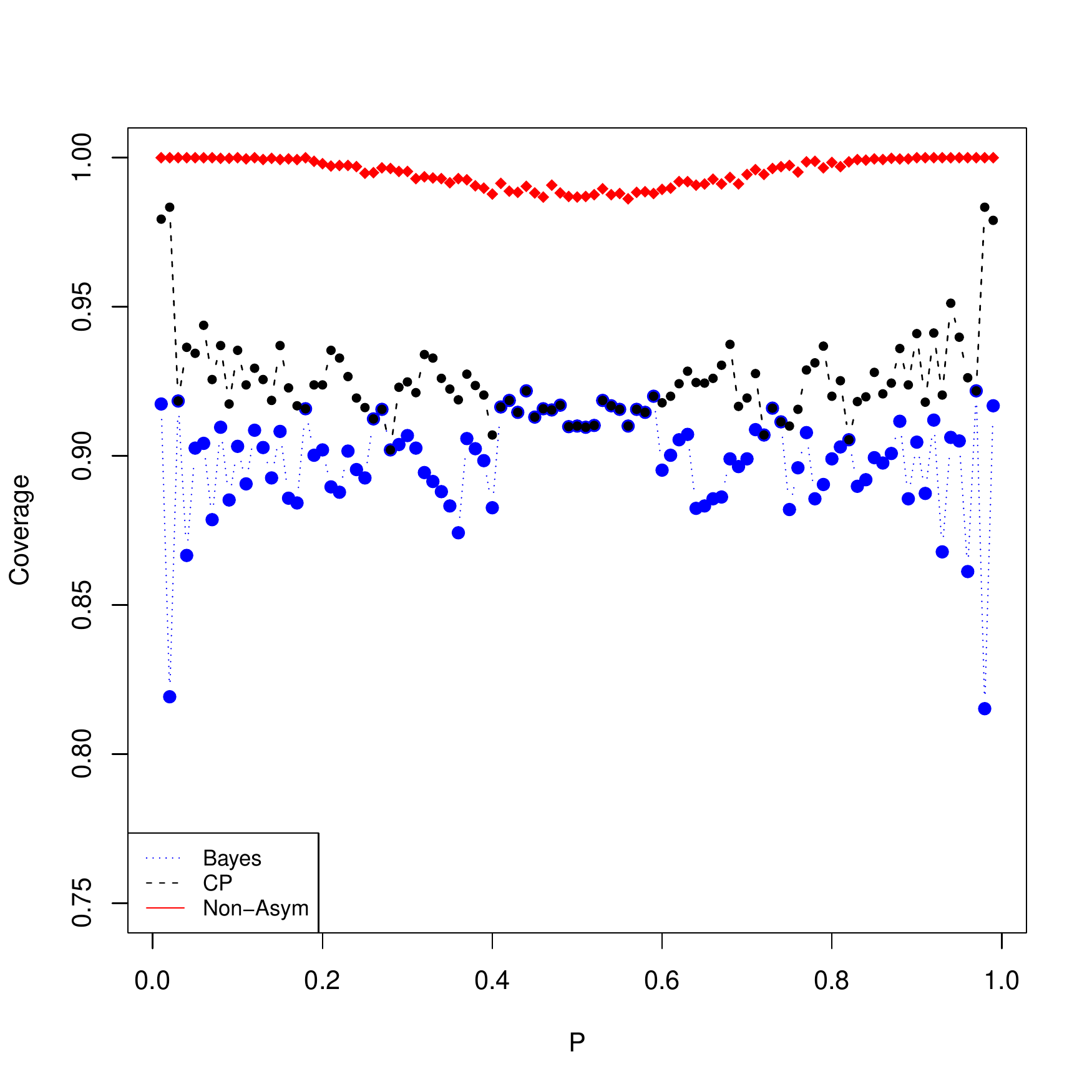}
  \caption{90\% coverage $k$ = 100}
  \label{fig:sub2}
\end{subfigure}
\caption[90\% confidence intervals]{Coverage probabilities for a nominal 90\% confidence interval obtained from $1,000$ Monte Carlo replications, for $k = 10$ on the left and $k = 100$ on the right. Non-asymptotic coverage probabilities are given in red, Clopper-Pearson exact confidence interval coverage given in black and the Bayesian credible interval is given in blue.}
\label{fig:90}
\end{figure}

\begin{figure}
\centering
\begin{subfigure}{.5\textwidth}
  \centering
  \includegraphics[width=1\linewidth]{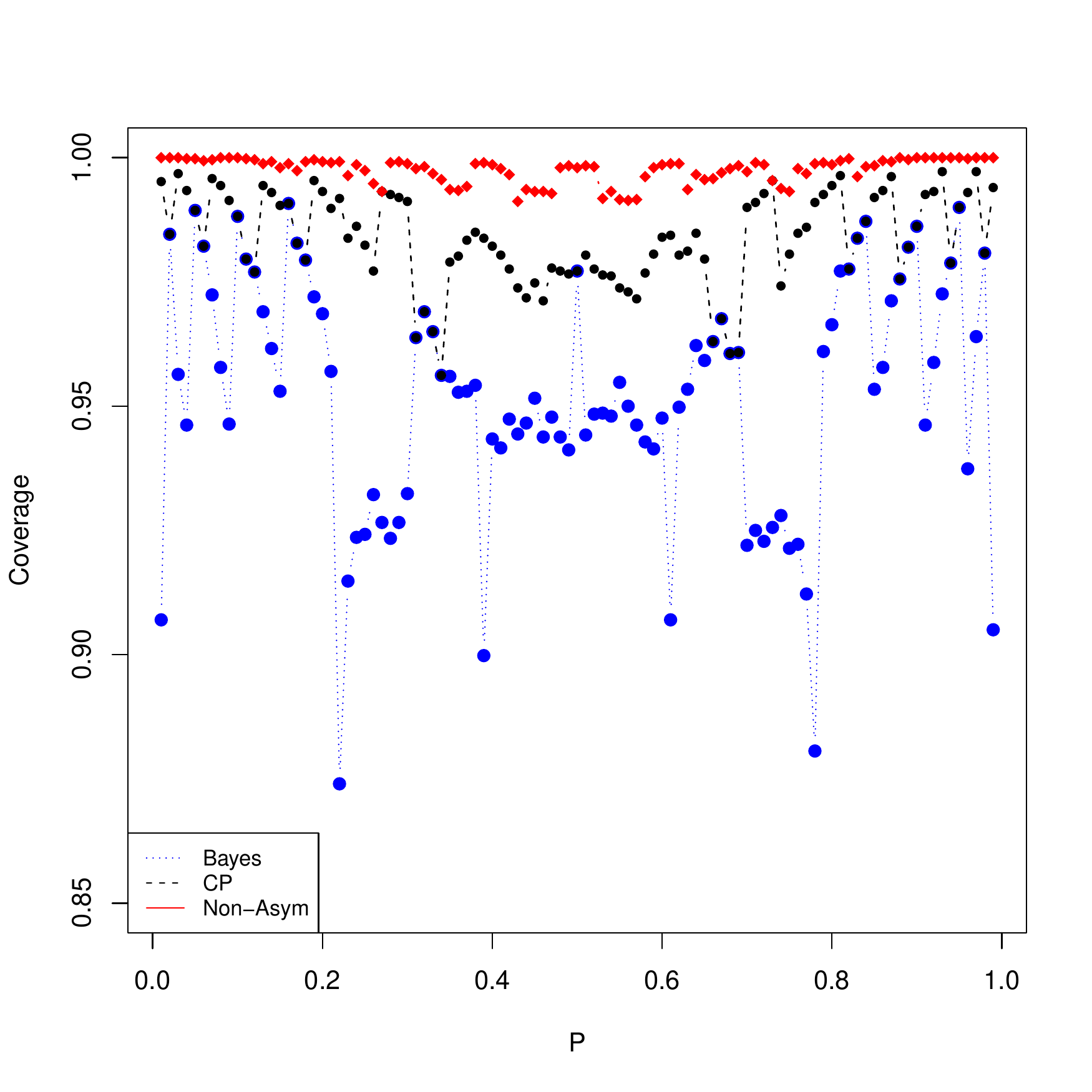}
  \caption{A subfigure}
\end{subfigure}%
\begin{subfigure}{.5\textwidth}
  \centering
  \includegraphics[width=1\linewidth]{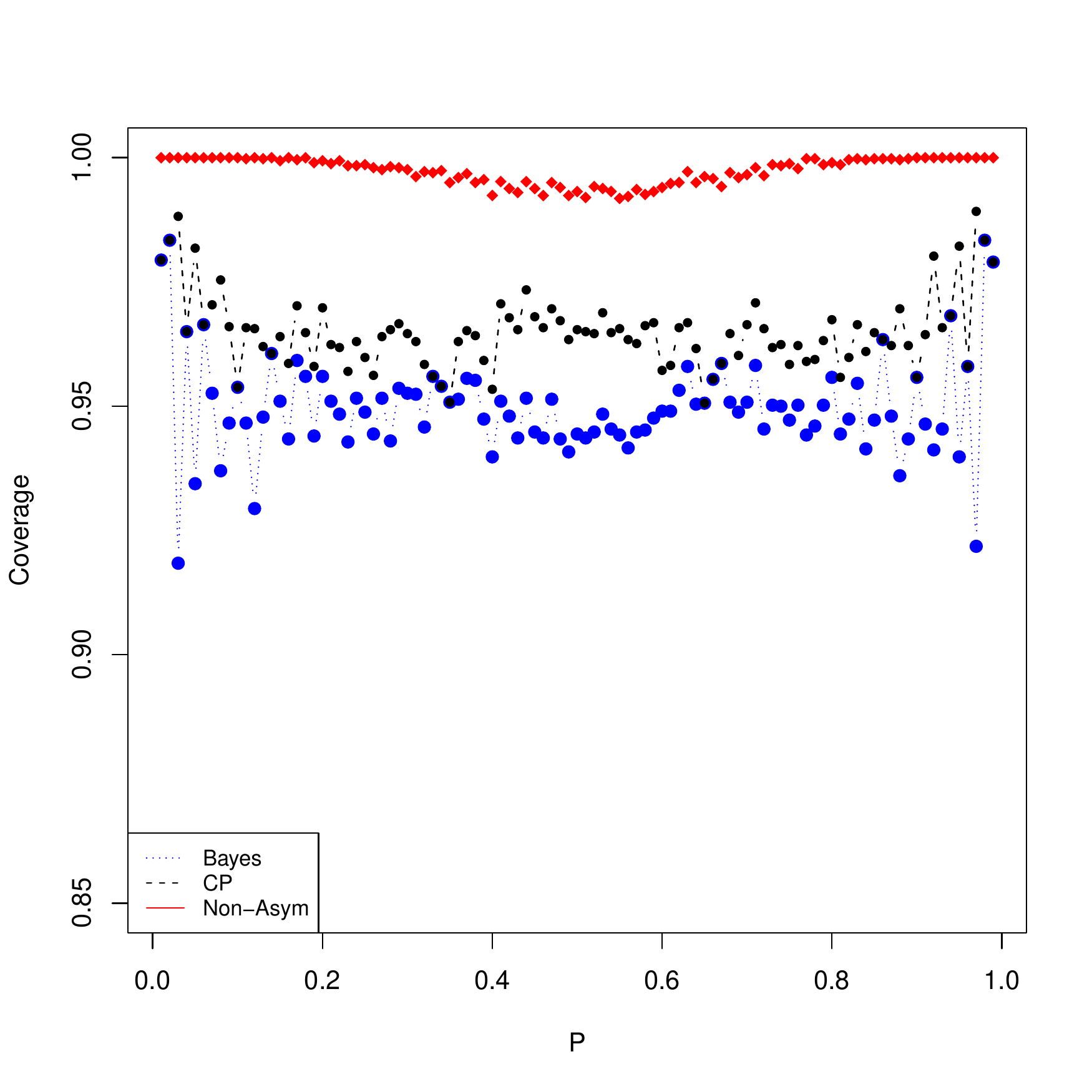}
  \caption{A subfigure}
\end{subfigure}
\caption[95\% confidence intervals]{Coverage probabilities for a nominal 95\% confidence interval obtained from $1,000$ Monte Carlo replications, for $k = 10$ on the left and $k = 100$ on the right. Non-asymptotic coverage probabilities are given in red, Clopper-Pearson exact confidence interval coverage given in black and the Bayesian credible interval is given in blue.}
\label{fig:95}
\end{figure}

In general, if sufficient knowledge of the distribution is available to the practitioner, it is possible to obtain better alternatives to the non-asymptotic intervals by performing a bespoke analysis. 
In the binomial example, exact intervals are available as we have the analytical form of the probability mass function.
However it is not always the case that such knowledge available to us.
Although, we do note that if an exact interval is unavailable, the asymptotic confidence intervals may produce invalid anti-conservative statements while the non-asymptotic intervals provides valid conservative statements.

\section{Discussion}
We have shown that the applicability of Monte Carlo integration varies substantially depending on the specific combination of the function being integrated and the measure it is integrated against, and we relate the approximation error to the rate of concentration of a sum of iid random variables.
We note that the procedure can be catastrophically bad for approximating volumes, but for integrating specific functions it may perform well.
The analysis can potentially be extended to sub-exponential random variables, or random variables defined on manifolds by exploiting the numerous available concentration inequalities.
Another important extension would be to consider non iid random variables, this could help in providing non-asymptotic bounds on the performance of Monte Carlo Maximum Likelihood Estimates for exponential graphical models as introduced in \cite{geyer1992constrained} and general Monte Carlo maximum likelihood estimation approaches.
This can potentially be done through the bounds of strongly log-concave densities presented in Section 4.2, however we would need to ensure that the sampled distribution is strongly log-convex, which is a non-trivial assumption to verify.

There are also other variants of Monte Carlo which are used in high dimensions.
For example, it has been observed that quasi-Monte Carlo performs well in some financial applications in high dimensions \citep{paskov1996faster}, even if there is an explicit dependence on the dimensionality of the function on the absolute error scale: $O[\log(n)^p/n]$.
However this rate is obtained through a worst case scenario analysis and the true dependence on dimension may be much smaller. Determining the effective dimension of the problem is still an open question, see \cite{owen2021open}.


\bigskip
\begin{center}
{\large\bf Appendix}
\end{center}
\section*{Properties of Lipschitz and Log-Concave Distributions}
\begin{lemma}\label{prop:jensen}
  \begin{itemize}
    \item If $f_i(x_i): \mathbb{R}^p \rightarrow \mathbb{R}$ are $L_p$-Lipschitz functions with respect to the Euclidean norm for $i = 1, \dots, n$, then $g(x_1, \dots, x_n) =\sum_{i = 1}^n f_i(x_i)/n$ is a $L_p/n^{1/2}$-Lipschitz function. 
    \item If $f_i(x_i): \mathbb{R}^p \rightarrow \mathbb{R}$ are $L_p$-Lipschitz convex/concave functions with respect to the Euclidean norm for $i = 1, \dots, n$, then $g(x_1, \dots, x_n) =\sum_{i = 1}^n f_i(x_i)/n$ is a $L_p/n^{1/2}$-Lipschitz convex/concave function. 
  \end{itemize}
\end{lemma}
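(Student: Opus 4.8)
The plan is to treat $g$ as a function on the product space $\mathbb{R}^{np}$ equipped with the Euclidean norm, and to verify both claims directly from the definitions. For the Lipschitz bound, I would write two points as concatenated blocks $X = (x_1, \dots, x_n)$ and $Y = (y_1, \dots, y_n)$, with each $x_i, y_i \in \mathbb{R}^p$, so that $\norm{X - Y}_2 = \left(\sum_{i=1}^n \norm{x_i - y_i}_2^2\right)^{1/2}$. Applying the triangle inequality to $g(X) - g(Y) = \frac{1}{n}\sum_{i=1}^n [f_i(x_i) - f_i(y_i)]$ and then the individual Lipschitz hypotheses yields $|g(X) - g(Y)| \leq \frac{L_p}{n}\sum_{i=1}^n \norm{x_i - y_i}_2$.

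The key step is to convert this sum of per-block Euclidean norms into the single Euclidean norm $\norm{X - Y}_2$, and this is exactly where the factor $n^{-1/2}$ is produced. I would apply the Cauchy--Schwarz inequality to the all-ones vector in $\mathbb{R}^n$ paired against $(\norm{x_1 - y_1}_2, \dots, \norm{x_n - y_n}_2)$, giving $\sum_{i=1}^n \norm{x_i - y_i}_2 \leq \sqrt{n}\,\norm{X - Y}_2$. Combining with the previous display gives $|g(X) - g(Y)| \leq \frac{L_p}{\sqrt{n}}\norm{X - Y}_2$, which is the claimed Lipschitz constant $L_p/n^{1/2}$. This establishes the first bullet.

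For the second bullet, I would observe that each $f_i$, regarded as a function of all $np$ coordinates but depending only on its own block $x_i$, inherits convexity (respectively concavity) from $f_i$: along any segment $\lambda X + (1-\lambda)Y$ the $i$-th block moves linearly as $\lambda x_i + (1-\lambda)y_i$ while the remaining blocks are irrelevant, so the defining convexity inequality for $f_i$ lifts verbatim to the product space. Since $g$ is the $1/n$-weighted sum of these functions, and nonnegative linear combinations preserve convexity (respectively concavity), $g$ is convex (respectively concave). The Lipschitz bound for $g$ is already supplied by the first bullet, so no further work is required.

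I do not anticipate a genuine obstacle, as the argument is elementary; the only point demanding care is the Cauchy--Schwarz step. It is essential to pair against the all-ones vector so as to extract the $\sqrt{n}$ gain, rather than using a cruder bound that would merely reproduce the constant $L_p$. Losing this dimensional improvement would defeat the purpose of the lemma, since it is precisely the $n^{-1/2}$ shrinkage of the Lipschitz constant of the average that drives the concentration inequalities for Monte Carlo estimators invoked in Section 4.
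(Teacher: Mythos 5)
Your proof is correct and follows essentially the same route as the paper's: both reduce the claim to bounding $\frac{1}{n}\sum_i \norm{x_i - y_i}_2$ by $n^{-1/2}\norm{X-Y}_2$, the only cosmetic difference being that you invoke Cauchy--Schwarz against the all-ones vector where the paper invokes Jensen's inequality for $\rho(x)=x^{1/2}$ --- these are the same inequality. The convexity argument in your second bullet likewise matches the paper's one-line observation that sums of convex/concave functions remain convex/concave.
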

\begin{proof}
  Let $x_i, y_i \in \mathbb{R}^p$, then
  \begin{align*}
    \left|\frac{1}{n}\sum_{i = 1}^n f_i(x_i) - \frac{1}{n}\sum_{i = 1}^n f_i(y_i) \right| &\leq \frac{1}{n} \sum_{i = 1}^n |f_i(x_i) - f_i(y_i)|\\ 
    &\leq L_p \frac{\sum_{i = 1}^n \norm{x_i - y_i}_2}{n} \\
    &\leq L_p \frac{\sum_{i = 1}^n [\norm{x_i - y_i}^2_2 ]^{1/2}}{n}  \\
    &\leq \frac{L_p}{n^{1/2}} \norm{(x_1, \dots, x_n) - (y_1, \dots, y_n)}_2, 
  \end{align*} 
  where the last inequality follows from Jensen's inequality for the concave function $\rho(x) = x^{1/2}$.
The second statement follows from the first by noting that sums of convex/concave functions remain convex/concave.
\end{proof}

This basic property of strongly log-concave distribution is shown in \cite{saumard2014log}, but due to notational differences we provide a proof for the convenience of the reader. 

\begin{lemma}\label{prop:log-concave}
Let $X_i$ be independent copies of a $\gamma$ strongly log-concave distribution for $i = 1, \dots, n$, then the joint distribution $(X_1, \dots, X_n)$ is also a $\gamma$ strongly log-concave distribution.
\end{lemma}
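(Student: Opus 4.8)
The plan is to work directly from the variational definition of strong log-concavity given in Section 2.3. Write the common density of each $X_i$ as $\phi(x) = \exp[-\psi(x)]$, so that by independence the joint density of $(X_1, \dots, X_n)$ is the product $\prod_{i=1}^n \phi(x_i) = \exp[-\Psi(x_1, \dots, x_n)]$, where the joint potential is the separable sum $\Psi(x_1, \dots, x_n) = \sum_{i=1}^n \psi(x_i)$. It then suffices to verify that $\Psi \colon \mathbb{R}^{np} \to \mathbb{R}$ satisfies the defining inequality with the same constant $\gamma$, now taken with respect to the Euclidean norm on $\mathbb{R}^{np}$.

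First I would fix $\mathbf{x} = (x_1, \dots, x_n)$ and $\mathbf{y} = (y_1, \dots, y_n)$ in $\mathbb{R}^{np}$ (each block $x_i, y_i \in \mathbb{R}^p$) together with $\lambda \in [0,1]$, and observe that the convex combination acts blockwise, so that the $i$-th block of $\lambda\mathbf{x} + (1-\lambda)\mathbf{y}$ is $\lambda x_i + (1-\lambda)y_i$. Because $\Psi$ is a sum over blocks, the quantity $\lambda\Psi(\mathbf{x}) + (1-\lambda)\Psi(\mathbf{y}) - \Psi[\lambda\mathbf{x} + (1-\lambda)\mathbf{y}]$ decomposes as $\sum_{i=1}^n \{\lambda\psi(x_i) + (1-\lambda)\psi(y_i) - \psi[\lambda x_i + (1-\lambda)y_i]\}$.

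Next I would apply the $\gamma$-strong log-concavity of $\phi$ to each summand, bounding the $i$-th term below by $\frac{\gamma}{2}\lambda(1-\lambda)\norm{x_i - y_i}_2^2$. Summing these bounds and using the separability of the Euclidean norm, namely $\norm{\mathbf{x} - \mathbf{y}}_2^2 = \sum_{i=1}^n \norm{x_i - y_i}_2^2$, collapses the sum into $\frac{\gamma}{2}\lambda(1-\lambda)\norm{\mathbf{x} - \mathbf{y}}_2^2$, which is exactly the inequality required of $\Psi$, establishing that the joint law is $\gamma$-strongly log-concave.

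The argument is essentially a bookkeeping exercise, so I do not anticipate a serious obstacle; the only points requiring a moment's care are that both the convex combination and the potential respect the product structure, and that it is precisely the additivity of the squared Euclidean norm across the $n$ blocks that preserves the constant $\gamma$ with no loss. Had one used a norm that failed to decompose additively over the blocks, the constant would not transfer cleanly, so the choice of the Euclidean norm in the definition is what makes the statement hold verbatim.
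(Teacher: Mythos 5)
Your proof is correct and follows essentially the same route as the paper's: write the joint potential as the separable sum $\sum_i \psi(x_i)$, apply the strong log-concavity inequality blockwise, and sum. If anything your final step is slightly cleaner, since you correctly observe that $\norm{\mathbf{x}-\mathbf{y}}_2^2 = \sum_{i=1}^n\norm{x_i-y_i}_2^2$ holds with equality, whereas the paper invokes the triangle inequality to get only the inequality it needs.
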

\begin{proof}
  Let $X_i$ be distributed according to $\phi_i(x_i) = \exp[-\psi_i(x_i)]$
  Let $y_i, z_i \in \mathbb{R}^p$, for $i = 1, \dots, n$ then by the log-concavity of each $x_i$, for each $i = 1, \dots, n$ and all $\lambda \in [0,1]$
  \begin{align*}
    \lambda\psi_i(y_i) + (1- \lambda)\psi_i(z_i) - \psi_i\{ \lambda y_i + (1 - \lambda)z_i  \} \geq \frac{\gamma}{2} \lambda(1- \lambda)\norm{y_i - z_i}_2^2, 
    \\
  \end{align*} 
  which implies
  \begin{align*}
    &\lambda \sum_{i =1}^n \psi_i(y_i) + (1- \lambda)\sum_{i =1}^n\psi_i(z_i) - \sum_{i =1}^n\psi_i[ \lambda y_i + (1 - \lambda)z_i  ] \\&\geq \frac{\gamma}{2} \lambda(1- \lambda) \sum_{i =1}^n\norm{y_i - z_i}_2^2\\
    &\geq \frac{\gamma}{2} \lambda(1- \lambda) \norm{(y_1, \dots, y_n ) - (z_1, \dots, z_n) }_2^2, 
  \end{align*}
  where the final inequality follows from the triangle inequality.
\end{proof}

\section*{Proof of Propositions \ref{prop:lip} and 3}
\subsection*{Proof of Proposition \ref{prop:lip}}

The first and third part of the proposition follow from choosing the function $g(X_1, \dots, X_n) = \sum_{i = 1}^n f(X_i)/n$ in bound (5), and noting by Lemma \ref{prop:jensen} that $g$ is convex/concave with Lipschitz constant $L_p/n^{1/2}$.
The confidence interval is obtained by setting: 
\[
 \alpha = 2\exp\left( -\frac{\delta^2n}{2L_p^2} \right) ,
\]
and solving for $\delta$. 

\subsection*{Proof of Proposition 3}
The proof for the first and third statements is almost the same as that of Proposition 2, we let $g(X_1,\dots, X_n) = \sum_{i = 1}^n f(X_i)/n$ and note that this function is $L_p/n^{1/2}$-Lipschitz and combine this with the fact that the joint distribution of $(X_1, \dots, X_n)$ is also strongly $\gamma$-log-concave (as shown in Lemma 2) gives the desired result. The confidence interval is obtained by setting: 
\[
 \alpha = 2\exp\left( -\frac{\gamma\delta^2n}{4L_p^2} \right) ,
\]
and solving for $\delta$.
\bibliography{biblio}
\end{document}